\documentclass[12pt]{article}
\usepackage[latin1]{inputenc}
\usepackage[british]{babel}
\usepackage{cmap}
\usepackage{lmodern}

\usepackage{amssymb, amsmath, amsthm}
\usepackage[a4paper,top=25mm,bottom=25mm,left=25mm,right=25mm]{geometry}
\usepackage{etex}
\usepackage{ragged2e}

\usepackage{authblk} 
\usepackage{pifont}
\usepackage{graphicx}
\usepackage[usenames,dvipsnames,svgnames,table]{xcolor}
\usepackage[figuresright]{rotating}
\usepackage{xtab} 
\usepackage{longtable} 
\usepackage{multirow}
\usepackage{footnote}
\usepackage[stable]{footmisc}
\usepackage{chngpage} 
\usepackage{pdflscape} 
\usepackage[nottoc,notlot,notlof]{tocbibind} 

\usepackage{pgfplots}
\pgfplotsset{every tick label/.append style={font=\small}}
\pgfplotsset{compat=1.14}
\usepackage{setspace}

\usepackage{array}
\newcolumntype{K}[1]{>{\centering\arraybackslash$}p{#1}<{$}}

\makesavenoteenv{tabular}
\usepackage{tabularx}
\usepackage{booktabs}
\usepackage{threeparttable} 
\usepackage[referable]{threeparttablex} 
\newcolumntype{R}{>{\raggedleft\arraybackslash}X}
\newcolumntype{L}{>{\raggedright\arraybackslash}X}
\newcolumntype{C}{>{\centering\arraybackslash}X}
\newcolumntype{A}{>{\columncolor{gray!25}}C}
\newcolumntype{a}{>{\columncolor{gray!25}}c}

\newlength{\tablen}

\usepackage{dcolumn} 
\newcolumntype{.}{D{.}{.}{-1}}

\usepackage{tikz}
\usetikzlibrary{arrows, calc, matrix, patterns, positioning, trees}
\usepackage[semicolon]{natbib}
\usepackage[hyphens]{url}
\usepackage{hyperref} 
\hypersetup{
  colorlinks   = true,    
  urlcolor     = blue,    
  linkcolor    = blue,    
  citecolor    = ForestGreen      
}
\usepackage{microtype}
\usepackage[justification=centering]{caption} 

\usepackage[labelformat=simple]{subcaption}

\DeclareCaptionLabelFormat{parenthesis}{(#2)}
\captionsetup[subfigure]{labelformat=parenthesis,font+=small,list=false}
\makeatletter
\renewcommand\p@subfigure{\arabic{figure}.}
\makeatother

\DeclareCaptionLabelFormat{parenthesis}{(#2)}
\captionsetup[subtable]{labelformat=parenthesis,font+=small,list=false}
\makeatletter
\renewcommand\p@subtable{\arabic{table}.}
\makeatother

\usepackage{enumitem}

\setlist[itemize]{leftmargin=2.5\parindent}
\setlist[enumerate]{leftmargin=2.5\parindent}

%
\def\addlegendimage{\csname pgfplots@addlegendimage\endcsname}

\theoremstyle{plain}

\newtheorem{lemma}{Lemma}[section]

\theoremstyle{definition}
\newtheorem{axiom}{Axiom}

\newtheorem{definition}{Definition}[section]

\theoremstyle{remark}


\makeatletter
\let\@fnsymbol\@alph
\makeatother

\def\keywords{\vspace{.5em} 
{\noindent \textit{Keywords}: }}

\def\AMS{\vspace{.5em} 
{\noindent \textbf{\emph{MSC} class}: }}

\def\JEL{\vspace{.5em} 
{\noindent \textbf{\emph{JEL} classification number}: }}

\author{\href{https://sites.google.com/view/doragretapetroczy}{D\'ora Gr\'eta Petr\'oczy}\thanks{~E-mail: \emph{doragreta.petroczy@uni-corvinus.hu} \newline Corvinus University of Budapest (BCE), Department of Finance, Budapest, Hungary} $\qquad \qquad$ \href{https://sites.google.com/view/laszlocsato}{L\'aszl\'o Csat\'o}\thanks{~Corresponding author. E-mail: \emph{laszlo.csato@sztaki.hu} \newline Institute for Computer Science and Control (SZTAKI), Laboratory on Engineering and Management Intelligence, Research Group of Operations Research and Decision Systems, Budapest, Hungary \newline 
Corvinus University of Budapest (BCE), Department of Operations Research and Actuarial Sciences, Budapest, Hungary}}

\title{Revenue allocation in Formula One: \\ A pairwise comparison approach}
\date{\today}

\def\Dedication{
\begin{small}
{\noindent
\emph{``The analysis of revenue sharing has paid little attention to the different ways that revenues for sharing can be collected or the basis of their allocation.'' \citep[p.~1182]{Szymanski2003}}
}
\end{small}
}

\begin{document}

\maketitle

\Dedication

\begin{abstract}
\noindent
A model is proposed to allocate Formula One World Championship prize money among the constructors. The methodology is based on pairwise comparison matrices, allows for the use of any weighting method, and makes possible to tune the level of inequality.
We introduce an axiom called scale invariance, which requires the ranking of the teams to be independent of the parameter controlling inequality. The eigenvector method is revealed to violate this condition in our dataset, while the row geometric mean method always satisfies it. The revenue allocation is not influenced by the arbitrary valuation given to the race prizes in the official points scoring system of Formula One and takes the intensity of pairwise preferences into account, contrary to the standard Condorcet method. Our approach can be used to share revenues among groups when group members are ranked several times.

\keywords{Decision analysis; pairwise comparisons; revenue allocation; Formula One; axiomatic approach}

\AMS{62F07, 90B50, 91B08}

\JEL{C44, D71}
\end{abstract}

\clearpage

\section{Introduction} \label{Sec1}

Professional sports leagues and championships generate billions of euros in common revenue. However, its allocation among the participants is often burdened with serious legal disputes centred around unequal shares and the possible violation of competition laws.
Consequently, constructing allocation rules which depend only on a few arbitrary variables, and are relatively simple, robust, and understandable for all participants, poses an important topic of academic research.

Formula One (Formula 1, or simply F1) is the highest class of single-seater car racing. A Formula One season consists of several races taking place around the world. The drivers and constructors accumulate points on each race to obtain the annual World Championships, one for drivers, and one for constructors. Since running a team costs at least 100 million US dollars per year,\footnote{~The study ``Revealed: The \$2.6 billion budget that fuels F1's 10 teams'' by \emph{Christian Sylt} is available at \url{https://www.forbes.com/sites/csylt/2018/04/08/revealed-the-2-6-billion-budget-that-fuels-f1s-ten-teams}.} the distribution of Formula One prize money (1,004 million US dollars in 2019\footnote{~See the summary titled ``Formula 1 teams' prize money payments for 2019 revealed'' by \emph{Dieter Rencken} and \emph{Keith Collantine}, which can be accessed at \url{https://www.racefans.net/2019/03/03/formula-1-teams-prize-money-payments-for-2019-revealed/}.}) can substantially affect competitive balance and the uncertainty around the expected outcome of races.

\begin{table}[ht!]
\centering
  \rowcolors{1}{}{gray!20}
  \caption{Formula One prize money allocation, 2019 (in million US dollars)}
  \label{Table1}
\begin{threeparttable}
\begin{tabularx}{1\textwidth}{lccCCc >{\bfseries}c}
 \toprule \hiderowcolors
 Team & Column 1 & Column 2 & Long-standing team & Championship bonus & Other & Sum \\ \bottomrule \showrowcolors
    Ferrari & 35    & 56    & 73    & 41    &   ---    & 205 \\
    Mercedes & 35    & 66    &    ---   & 41    & 35    & 177 \\
    Red Bull & 35    & 46    &   ---    & 36    & 35    & 152 \\
    McLaren & 35    & 32    &    ---   & 33    &    ---   & 100 \\
    Renault & 35    & 38    &   ---    &    ---   &   ---    & 73 \\
    Haas  & 35    & 35    &    ---   &   ---    &    ---   & 70 \\
    Williams & 35    & 15    &    ---   &   ---    & 10    & 60 \\
    Racing Point & 35    & 24    &   ---    &  ---    &   ---    & 59 \\
    Sauber & 35    & 21    &    ---   &    ---   &   ---    & 56 \\
    Toro Rosso & 35    & 17    &  ---     &   ---    &    ---   & 52 \\ \bottomrule
    Sum & 350 &350 &73 &151 &80 & 1004 \\ \toprule
\end{tabularx}
\begin{tablenotes} \footnotesize 
\item Source: \url{https://www.racefans.net/2019/03/03/formula-1-teams-prize-money-payments-for-2019-revealed/}
\end{tablenotes}
\end{threeparttable}
\end{table}

The current prize money allocation of Formula One is presented in Table~\ref{Table1}. 
Column 1 corresponds to the revenue distributed equally among the teams which have finished in the top ten in at least two of the past three seasons. Column 2 corresponds to the performance-based payment, determined by the team's finishing position in the previous season. Ferrari has a special Long-standing Team payment as being the only team that competes since the beginning of the championship. Column 4 is paid to the previous champions, and three other teams receive bonus payments.

It is worth noting that only the third of the money pot is allocated strictly on the performance in the previous season and bottom teams are underrepresented.
Therefore, the new owner of the company controlling Formula One since January 2017 (Liberty Media), plans to reform the revenue allocation of the championship, mainly to increase the competitiveness of smaller teams.\footnote{~Consider the article ``F1 2021: Liberty's masterplan for Formula One's future uncovered'' by \emph{Dieter Rencken} and \emph{Keith Collantine}, available at \url{https://www.racefans.net/2018/04/11/f1-2021-libertys-masterplan-for-formula-ones-future-uncovered/}. Another column titled ``Revealed: The winners and losers under Liberty's 2021 F1 prize money plan'' by \emph{Dieter Rencken} and \emph{Keith Collantine} discusses the potential impacts of this plan and can be accessed at \url{https://www.racefans.net/2018/04/11/revealed-the-winners-and-losers-under-libertys-2021-f1-prize-money-plan/}.}

\begin{table}[t]
\centering
\captionsetup{justification=centering}
\caption{Alternative results of the 2013 Formula One World Championship}
\label{Table2}

\begin{subtable}{\linewidth}
\centering
\caption{Points scoring systems}
\label{Table2a}
\rowcolors{1}{gray!20}{}
    \begin{tabularx}{0.8\textwidth}{l CCCCC CCCCC} \toprule \hiderowcolors
    Year  & \multicolumn{10}{c}{Positions} \\
          & 1     & 2     & 3     & 4     & 5     & 6     & 7     & 8     & 9     & 10 \\ \bottomrule \showrowcolors
    1961--1990 & 9     & 6     & 4     & 3     & 2     & 1     & 0     & 0     & 0     & 0 \\
    1991--2002 & 10    & 6     & 4     & 3     & 2     & 1     & 0     & 0     & 0     & 0 \\
    2003--2009 & 10    & 8     & 6     & 5     & 4     & 3     & 2     & 1     & 0     & 0 \\
    2010-- & 25    & 18    & 15    & 12    & 10    & 8     & 6     & 4     & 2     & 1 \\
\bottomrule
    \end{tabularx}
\end{subtable}

\vspace{0.5cm}
\begin{subtable}{\linewidth}
\caption{World Constructors' Championship standings}
\label{Table2b}
\rowcolors{1}{}{gray!20}
\centerline{
    \begin{tabularx}{1.05\textwidth}{l CCCC CC>{\bfseries}C>{\bfseries}C} \toprule \hiderowcolors
    Team  & \multicolumn{8}{c}{Points scoring systems} \\
          & \multicolumn{2}{c}{1961--1990} & \multicolumn{2}{c}{1991--2002} & \multicolumn{2}{c}{2003--2009} & \multicolumn{2}{c}{\textbf{2010--}} \\
          & Points & Rank  & Points & Rank  & Points & Rank  & Points & Rank \\ \bottomrule \showrowcolors
    Red Bull & 190   & 1     & 203   & 1     & 243   & 1     & 596   & 1 \\
    Mercedes & 88    & 2     & 91    & 2     & 140   & 3     & 360   & 2 \\
    Ferrari & 87    & 3     & 89    & 3     & 141   & 2     & 354   & 3 \\
    Lotus & 84    & 4     & 85    & 4     & 127   & 4     & 315   & 4 \\
    McLaren & 12    & 5     & 12    & 5     & 38    & 5     & 122   & 5 \\
    Force India & 6     & 7     & 6     & 7     & 23    & 6     & 77    & 6 \\
    Sauber & 7     & 6     & 7     & 6     & 19    & 7     & 57    & 7 \\
    Toro Rosso & 1     & 8     & 1     & 8     & 9     & 8     & 33    & 8 \\
    Williams & 0     & 9     & 0     & 9     & 1     & 9     & 5     & 9 \\
    Marussia & 0     & 10    & 0     & 10    & 0     & 10    & 0     & 10 \\
    Caterham & 0     & 11    & 0     & 11    & 0     & 11    & 0     & 11 \\ \toprule
    \end{tabularx}
}
\end{subtable}
\end{table}

Nonetheless, since the money allocation is based on the ranking of the constructors, it is important to apply a robust and reliable ranking procedure---but the current points scoring system fails to satisfy this criterion. An illustration is provided by Table~\ref{Table2}: in the 2013 season, Ferrari would have obtained the second position ahead of Mercedes if the previous points scoring system (effective from 2003 to 2009) would have been used, and Sauber would have been the sixth instead of Force India according to the 1961--1990 and 1991--2002 rules.

The long list of Formula One World Championship points scoring systems, available at \url{https://en.wikipedia.org/wiki/List_of_Formula_One_World_Championship_points_scoring_systems}, highlights the arbitrariness of the points scoring systems, and suggests that the relative importance of the different positions in a race remains unclear. The criticism in \citet[Section~2]{Haigh2009} is also worth studying.
A website (\url{https://www.formula1points.com/simulator}) dedicated to trialling arbitrary points system definitions shows the popularity of such what-if scenarios among the fans.


Inspired by similar discrepancies, the current paper aims to outline a formal model that can be used to share Formula One prize money among the teams in a meaningful way. Our proposal is based on pairwise comparisons and has strong links to the Analytic Hierarchy Process (AHP), a famous decision-making framework. In particular, we construct a multiplicative pairwise comparison matrix from the race results. Contrary to the Condorcet-like methods \citep{SoaresdeMelloetal2015}, it is not only said that a team is preferred to another if it has better results in the majority of the races, but the intensity of these pairwise preferences are also taken into account. Two popular weighting methods, the eigenvector method, and the row geometric mean, are considered to compute the revenue share of each team.

The values of the pairwise comparison matrix depend on a parameter, which is shown to control the inequality of the allocation. Since the ranking of the teams by the eigenvector method is not independent of this parameter, the row geometric mean method has more favourable axiomatic properties in our setting.
The application presented in the paper gives some insight into how the parameter influences the expected inequality of the revenue allocation, thus the decision-maker can fix the rules of the distribution before the competition is conducted.

The main contribution of our study is providing an alternative solution in the Formula One industry, which means innovation in reforming revenue allocation and at the same time applicable in other settings to replicate.
The pairwise comparison approach has the following, mainly advantageous, features:
\begin{itemize}
\item
The derivation of the pairwise comparison matrix from the race results depends on a single variable, which regulates the inequality of the distribution. The user can choose this by taking into account preferences on how much inequality is desirable.
\item
It allows for the use of any weight deriving methods used in the AHP literature;
\item
Except for its sole parameter, the methodology is not influenced by any ad hoc decision such as the scores used in the official points system of Formula One.
\item
It supports the reliable performance of the bottom teams, which seldom score points, therefore the current system awards if they achieve unexpected results, mainly due to extreme events in some races.
\end{itemize}
Besides that, a reasonable axiom in our setting is introduced for weighting methods, and its violation by the eigenvector method is presented in real data.

The paper proceeds as follows.
Section~\ref{Sec2} gives an overview of related articles. The methodology is presented in Section~\ref{Sec3} and is applied in Section~\ref{Sec4} to the Formula One World Constructors' Championship between 2014 and 2018. Section~\ref{Sec5} concludes.

\section{Literature review} \label{Sec2}

Our work is connected to several research fields.
First, revenue sharing and its impact on competitive balance is a prominent topic of sports economics. \citet{AtkinsonStanleyTschirhart1988} examine revenue allocation as an incentive mechanism encouraging the desired behaviour of the teams in a league.
\citet{Kesenne2000} analyses revenue sharing under the profit- and utility-maximising hypothesis, and finds that equality promotes competitive balance.
However, revenue sharing can lead to a more uneven contest under reasonable assumptions \citep{Szymanski2003, SzymanskiKesenne2004}.
\citet{BergantinosMoreno-Ternero2020a} give direct, axiomatic, and game-theoretical foundations for two simple rules used to share the money coming from broadcasting sports leagues among participating teams.
\citet{BergantinosMoreno-Ternero2020b} introduce axioms formalising alternative ways of allocating the extra revenue obtained from additional viewers.

Second, the proposed method is based on multiplicative pairwise comparison matrices \citep{Saaty1977, Saaty1980}, thus it continues the applications of the AHP in resource allocation problems \citep{RamanathanGanesh1995, SaatyPeniwatiShang2007}. For example, \citet{Ossadnik1996} extensively uses pairwise comparison matrices to allocate the expected synergies in a merger to the partners.
Furthermore, the current paper offers theoretical contributions to this area. In particular, we consider two popular weighting methods and introduce a reasonable axiom called \emph{scale invariance}, that is, the ranking should be independent of the variable governing the inequality of the allocation. Since it will be investigated whether this requirement is satisfied by the two methods in the case of Formula One results, our paper can be regarded as a companion to \citet{DulebaMoslem2019}, which provides the first examination of another property called \emph{Pareto efficiency} \citep{BlanqueroCarrizosaConde2006, Bozoki2014, BozokiFulop2018} on real data.

Third, there are some direct applications of pairwise comparison matrices in sports ranking. \citet{Csato2013a} and \cite{Csato2017c} recommend this approach to obtain a fair ranking in Swiss-system chess team tournaments. \citet{BozokiCsatoTemesi2016} address the issue of ranking top tennis players of all time, while \citet{ChaoKouLiPeng2018} evaluate historical Go players.

Fourth, many articles have studied economic problems emerging in Formula One.
Rule changes seem to reduce the teams' performances but to improve the competitive balance, and the revenue gain from the latter turns out to be bigger than revenue loss from the former \citep{MastromarcoRunkel2009}.
\citet{JuddeBoothBrooks2013} undertake an econometric analysis of competitive balance in this sport.
\citet{ZaksaiteRadusevicius2017} examine the legal aspects of team orders and other tactical decisions in Formula One.
According to \citet{BudzinskiMuller-Kock2018}, the revenue allocation scheme of this sport is consistent with an anticompetitive interpretation and may be subject to an antitrust investigation.
\citet{GutierrezLozano2018} propose a framework for the design efficiency assessment of some racing circuits that hosted Formula One.
\citet{HendersonKirrane2018} offer a Bayesian approach to forecast race results.
\citet{SchreyerTorgler2018} investigate whether race outcome uncertainty affects the TV demand for Formula One in Germany and conclude that a balanced competition increases the number of viewers.

Fifth, our procedure leads to an alternative ranking of Formula One constructors, which has its antecedents, too.
\citet{Kladroba2000} introduces well-known methods of aggregation to determine the World Championship in 1998.
Some of the ranking problems that occurred in the history of Formula One are found to result from defects of the Borda method \citep{SoaresdeMelloetal2005}.
\citet{Haigh2009} illustrates the instability of the scoring system and argues that any system should be robust to plausible changes, which is not satisfied by the Formula One scoring rules.
According to \citet{Anderson2014}, subjective point-based rankings may fail to provide an accurate ranking of competitors based on ability.
\citet{SoaresdeMelloetal2015} present a variant of the Condorcet method with weakly rational decision-makers to compare the teams which competed in the 2013 season.
\citet{Sitarz2013} presents the incenter of a convex cone to obtain a new ranking of Formula One drivers.
\citet{Phillips2014} measures driver performances by adjusting for team and competition effects.
\citet{BellSmithSabelJones2016} aim to identify the best Formula One drivers of all time.
\citet{Corvalan2018} addresses the problem of how the election of the world champion depends on the particular valuation given to the race prizes.
\citet{KondratevIanovskiNesterov2019} characterise the family of geometric scoring rules by the axioms of consistency for adding/removing a unanimous winner and a unanimous loser, respectively.
\citet{DietzenbacherKondratev2020} is probably the first paper that develops and motivates prize allocation rules for competitions directly from axioms.

\section{Theoretical background} \label{Sec3}

In this section, the main components of the model will be presented: the multiplicative pairwise comparison matrix, its derivation from the race results, a straightforward axiom in our setting, and a basic measure of inequality.

\subsection{Multiplicative pairwise comparison matrices} \label{Sec31}

Consider a set of alternatives $N = \{ 1,2, \dots ,n \}$ such that their pairwise comparisons are known: $a_{ij}$ shows how many times alternative $i$ is better than alternative $j$.

The sets of positive (with all elements greater than zero) vectors of size $n$ and matrices of size $n \times n$ are denoted by the symbols $\mathbb{R}^{n}_+$ and $\mathbb{R}^{n \times n}_+$, respectively.

The pairwise comparisons are collected into a matrix satisfying the reciprocity condition, hence any entry below the diagonal equals the reciprocal of the corresponding entry above the diagonal.

\begin{definition} \label{Def31}
\emph{Multiplicative pairwise comparison matrix}:
Matrix $\mathbf{A} = \left[ a_{ij} \right] \in \mathbb{R}^{n \times n}_+$ is a \emph{multiplicative pairwise comparison matrix} if $a_{ji} = 1/a_{ij}$ holds for all $1 \leq i,j \leq n$.
\end{definition}

We will sometimes omit the word ``multiplicative'' for the sake of simplicity.

Let $\mathcal{A}^{n \times n}$ be the set of pairwise comparison matrices with $n$ alternatives.

Pairwise comparisons are usually used to obtain an approximation of the relative priorities of the alternatives.

\begin{definition} \label{Def32}
\emph{Weight vector}:
Vector $\mathbf{w}  = \left[ w_{i} \right] \in \mathbb{R}^n_+$ is a \emph{weight vector} if $\sum_{i=1}^n w_{i} = 1$.
\end{definition}

Let $\mathcal{R}^{n}$ be the set of weight vectors of size $n$.

\begin{definition} \label{Def33}
\emph{Weighting method}:
Mapping $f: \mathcal{A}^{n \times n} \to \mathcal{R}^{n}$ is a \emph{weighting method}.
\end{definition}

The weight of alternative $i$ in the pairwise comparison matrix $\mathbf{A} \in \mathcal{A}^{n \times n}$ according to the weighting method $f$ is denoted by $f_i(\mathbf{A})$.

There exist several methods to derive a weight vector, see, for example, \citet{ChooWedley2004} for a thorough overview.
The most popular procedures are the \emph{eigenvector method} \citep{Saaty1977, Saaty1980}, and the \emph{row geometric mean (logarithmic least squares) method} \citep{WilliamsCrawford1980, CrawfordWilliams1985, DeGraan1980, deJong1984, Rabinowitz1976}.

\begin{definition} \label{Def34}
\emph{Eigenvector method} ($EM$):
The weight vector $\mathbf{w}^{EM} (\mathbf{A}) \in \mathcal{R}^n$ provided by the \emph{eigenvector method} is the solution of the following system of linear equations for any pairwise comparison matrix $\mathbf{A} \in \mathcal{A}^{n \times n}$:
\[
\mathbf{A} \mathbf{w}^{EM}(\mathbf{A}) = \lambda_{\max} \mathbf{w}^{EM}(\mathbf{A}),
\]
where $\lambda_{\max}$ denotes the maximal eigenvalue, also known as the principal or Perron eigenvalue, of the (positive) matrix $\mathbf{A}$.
\end{definition}

\begin{definition} \label{Def35}
\emph{Row geometric mean method} ($RGM$):
The \emph{row geometric mean method} is the function $\mathbf{A} \to \mathbf{w}^{RGM} (\mathbf{A})$ such that the weight vector $\mathbf{w}^{RGM} (\mathbf{A})$ is given by
\[
w_i^{RGM}(\mathbf{A}) = \frac{\prod_{j=1}^n a_{ij}^{1/n}}{\sum_{k=1}^n \prod_{j=1}^n a_{kj}^{1/n}}.
\]
\end{definition}

The row geometric mean method is sometimes called the \emph{Logarithmic Least Squares Method} ($LLSM$) because it is the solution to the following optimisation problem:
\[
\min_{\mathbf{w} \in \mathcal{R}^n} \sum_{i=1}^n \sum_{j=1}^n \left[ \log a_{ij} - \log \left( \frac{w_i}{w_j} \right) \right]^2.
\]

Although the application of the row geometric mean is axiomatically well-founded \citep{Fichtner1986, Barzilai1997, LundySirajGreco2017, BozokiTsyganok2019}, and the eigenvector method has some serious theoretical shortcomings \citep{JohnsonBeineWang1979, BlanqueroCarrizosaConde2006, BanaeCostaVansnick2008}, Saaty's proposal remains the default choice of most practitioners.
Therefore, both procedures will be considered.

\subsection{From race results to a pairwise comparison matrix} \label{Sec32}

A Formula One season consists of a series of races, contested by two cars/drivers of each constructor/team. We say that team $i$ has scored one goal against team $j$ if a given car of team $i$ is ahead of a given car of team $j$ in a race. Thus, if there are no incomparable cars, then: 
\begin{itemize}
\item
Team $i$ ($j$) has scored four (zero) goals against team $j$ ($i$) if both cars of team $i$ have finished above both cars of team $j$;
\item
Team $i$ ($j$) has scored three (one) goal(s) against team $j$ ($i$) if one car of team $i$ has finished above both cars of team $j$, and the other car of team $i$ has finished above one car of team $j$;
\item
Team $i$ ($j$) has scored two (two) goals against team $j$ ($i$) if one car of team $i$ has finished above both cars of team $j$ but both cars of team $j$ have finished above the other car of team $i$.\footnote{~We follow the official definition in classifying a driver as finished if he completed over 90\% of the race distance.}
\end{itemize}
The goals scored by the constructors in a race are aggregated over the whole season without weighting, similarly to the official points scoring system.\footnote{~Except for the \href{https://en.wikipedia.org/wiki/2014_Abu_Dhabi_Grand_Prix}{2014 Abu Dhabi Grand Prix}, which awarded double points as the last race of the season.}

Consequently, the maximum number of goals that a team can score against another is four times the number of races. Since a car might not finish a race, it is assumed that each finishing car is better than another, which fails to finish. Nonetheless, two cars may be incomparable if both of them failed to finish the race. In this case, no goal is scored.
The goals of the constructors are collected into the $n \times n$ \emph{goals matrix}.

The pairwise comparison matrix $\mathbf{A} = \left[ a_{ij} \right]$ is obtained from the goals matrix: if constructor $i$ has scored $g_{ij}$ goals against constructor $j$, while constructor $j$ has scored $g_{ji}$ goals against constructor $i$, then $a_{ij} = g_{ij} / g_{ji}$ and $a_{ji} = g_{ji} / g_{ij}$ to guarantee the reciprocity condition.

Theoretically, this procedure is ill-defined because the problem of division by zero is not addressed. However, in our dataset $g_{ij}$ was always positive, in other words, at least one car of every team was ahead of one car of any other team at least in one race during the whole season. Thus the somewhat arbitrary adjustment of zeros in \citet{BozokiCsatoTemesi2016} is avoided here. Nonetheless, the potential problem cannot be neglected because it did not appear in our dataset. For us, adding a constant $\varepsilon$ to each element of the goals matrix is an attractive solution as the definition $a_{ij} = (g_{ij} + \varepsilon) / (g_{ji} + \varepsilon)$ also mitigates the sharp differences between the top and bottom teams, which greatly depend on whether the drivers of the weak team were ``lucky'' enough to finish certain races. However, the thorough analysis of this suggestion is left to future research.

As we have mentioned, two weighting methods, the $EM$ and the $RGM$ will be used to derive a weight vector, which directly provides an allocation of the available amount.

The presented procedure does not contain any variable, thus it might lead to an allocation that cannot be tolerated by the decision-maker because of its (in)equality.
Hence the definition of the pairwise comparison matrix is modified such that:
\[
a_{ij} = \left( \frac{g_{ij}}{g_{ji}} \right)^\alpha \qquad \text{and} \qquad a_{ji} = \left( \frac{g_{ji}}{g_{ij}} \right)^\alpha \qquad\text{for all $i, j \in N$},
\]
where $\alpha \geq 0$ is a parameter. If $\alpha$ is small, then $\mathbf{A}$ is close to the unit matrix, the weights are almost the same, and the shares remain roughly equal. The effect of $\alpha$ will be further investigated in Section~\ref{Sec4}.

Note that the proposed model is similar to the Condorcet-like methods, widely used in social choice theory, only at first sight. For example, \citet{SoaresdeMelloetal2015} introduces a so-called Condorcet graph, where the nodes represent the Formula One teams, and there is a directed edge from node $i$ to node $j$ if team $i$ is preferred to team $j$, that is, if team $i$ is preferred to team $j$ in the majority of races. Therefore, the Condorcet variants take only the pairwise preferences into account but not their intensities. Unsurprisingly, this loss of information often means that the Condorcet method does not provide a strict ranking, while ties between the revenue share of two teams in our model are less frequent (and does not mean any problem). 

\subsection{A natural axiom for weighting methods} \label{Sec33}

Some papers \citep{GenestLapointeDrury1993, CsatoRonyai2016} examine ordinal pairwise preferences, that is, pairwise comparison matrices with entries of $a$ or $1/a$.
This idea has inspired the following requirement, which is also an adaptation of the property called \emph{power invariance} \citep{Fichtner1984} for the ranking of the alternatives.

\begin{axiom}
\emph{Scale invariance}:
Let $\mathbf{A} = \left[ a_{ij} \right] \in \mathcal{A}^{n \times n}$ be any pairwise comparison matrix and $\alpha > 0$ be a (positive) parameter. Let $\mathbf{A}^{(\alpha)} = \left[ a_{ij}^{(\alpha)} \right] \in \mathcal{A}^{n \times n}$ be the pairwise comparison matrix defined by $a_{ij}^{(\alpha)} = a_{ij}^{\alpha}$. 
Weighting method $f: \mathcal{A}^{n \times n} \to \mathcal{R}^n$ is called \emph{scale invariant} if for all $1 \leq i,j \leq n$:
\[
f_i \left( \mathbf{A} \right) \geq f_j \left( \mathbf{A} \right) \iff f_i \left( \mathbf{A}^{(\alpha)} \right) \geq f_j \left( \mathbf{A}^{(\alpha)} \right).
\]
\end{axiom}

Scale invariance implies that the ranking of the alternatives does not change if a different scale is used for pairwise comparisons. For example, when only two verbal expressions, ``weakly preferred'' and ``strongly preferred'' are allowed, the ranking should remain the same if these preferences are represented by the values $2$ and $3$, or $4$ and $9$, respectively.


In the setting of Section~\ref{Sec32}, scale invariance does not allow the ranking of the teams to depend on the parameter $\alpha$, which seems to be reasonable because the underlying data (the goals matrix) are fixed. In other words, if constructor $i$ receives more money than constructor $j$ under any value of $\alpha$, then it should receive more money for all $\alpha > 0$.

\begin{lemma} \label{Lemma31}
The eigenvector method does not satisfy scale invariance.
\end{lemma}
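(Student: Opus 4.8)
The plan is to exhibit an explicit counterexample: a single pairwise comparison matrix $\mathbf{A}$ and two exponents (say $\alpha = 1$ and some $\alpha' \neq 1$) such that the eigenvector method ranks two alternatives one way for $\mathbf{A}$ and the opposite way for $\mathbf{A}^{(\alpha')}$. Since the claim is a negative (non-satisfaction) statement, one well-chosen instance suffices, and the smallest interesting case is $n = 3$. So I would fix a $3 \times 3$ reciprocal matrix with parameters, for instance
\[
\mathbf{A} = \begin{bmatrix} 1 & a & b \\ 1/a & 1 & c \\ 1/b & 1/c & 1 \end{bmatrix},
\]
and look for values of $a, b, c$ for which $w^{EM}_1(\mathbf{A})$ and $w^{EM}_2(\mathbf{A})$ are close, so that raising all entries to a power flips their order. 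A natural way to force a near-tie is to make the matrix inconsistent: if it were consistent, the eigenvector would be a genuine geometric ratio vector and scale invariance would hold trivially, so inconsistency is essential.

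The key computational step is to get a usable handle on the principal eigenvector of a $3 \times 3$ reciprocal matrix as a function of $\alpha$. For $n = 3$ there is a clean route: the principal right eigenvector $\mathbf{w}$ of $\mathbf{A} = [a_{ij}]$ is proportional to any column of the adjugate of $(\lambda_{\max} \mathbf{I} - \mathbf{A})$, or more concretely one can use the known closed form for $\lambda_{\max}$ of a $3\times 3$ reciprocal matrix in terms of the single inconsistency parameter and then solve the linear system. I would carry this out for $\mathbf{A}$ (i.e.\ $\alpha = 1$) and separately for $\mathbf{A}^{(2)}$ (all entries squared, which is again reciprocal), compare $w_1$ versus $w_2$ in each case, and choose numerical values of $a, b, c$ making the two comparisons go opposite ways. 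Concretely I would try something like $a$ slightly above $1$, with $b$ and $c$ chosen so that the "indirect" comparison $1 \to 3 \to 2$ pulls in favour of alternative $2$ while the "direct" entry $a_{12} = a$ pulls in favour of alternative $1$; then the exponent $\alpha$ rescales the relative strengths of the direct and indirect channels nonlinearly, which is exactly the mechanism that can reverse the order.

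The main obstacle is verifying the strict inequalities rigorously rather than just numerically — i.e.\ producing explicit rational (or algebraic) entries for which one can certify $w^{EM}_i(\mathbf{A}) > w^{EM}_j(\mathbf{A})$ but $w^{EM}_i(\mathbf{A}^{(\alpha)}) < w^{EM}_j(\mathbf{A}^{(\alpha)})$. Since $\lambda_{\max}$ is generally irrational, the cleanest presentation is to avoid it altogether: the comparison $w_i \geq w_j$ for the Perron eigenvector can be decided without computing $\lambda_{\max}$ by examining the sign of an appropriate entry of $\operatorname{adj}(\lambda_{\max}\mathbf{I} - \mathbf{A})$, or even more simply by noting that for $n=3$ the ratio $w_1/w_2$ equals $(a_{12} + a_{13}/a_{23})/(1/a_{12} + a_{13})\cdot(\text{something})$ — I would derive the exact ratio formula, which is elementary, and then the whole argument reduces to comparing two explicit algebraic expressions in $a, b, c$ at $\alpha = 1$ and at the chosen $\alpha$. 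Once the ratio formula is in hand, picking the entries and checking the sign flip is a short finite computation, which I would relegate to a displayed line or two rather than belabour. The contrast with the $RGM$ case (which the paper will presumably show is always scale invariant, since $w^{RGM}_i(\mathbf{A}^{(\alpha)}) = w^{RGM}_i(\mathbf{A})^{\alpha}/\!\sum_k w^{RGM}_k(\mathbf{A})^{\alpha}$ is order-preserving in the $w^{RGM}_i(\mathbf{A})$) is what makes the counterexample meaningful, but it is not needed for the proof of this lemma itself.
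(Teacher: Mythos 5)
There is a genuine gap: your plan hinges on finding a counterexample with $n=3$, but no $3\times 3$ counterexample exists. For any $3\times 3$ positive reciprocal matrix
$\mathbf{A}=\bigl[\begin{smallmatrix}1 & a & b\\ 1/a & 1 & c\\ 1/b & 1/c & 1\end{smallmatrix}\bigr]$
the vector of row geometric means is itself the Perron eigenvector: writing $k=b/(ac)$, a direct computation gives $\mathbf{A}\mathbf{r}=\left(1+k^{1/3}+k^{-1/3}\right)\mathbf{r}$ with $r_1=(ab)^{1/3}$, $r_2=(c/a)^{1/3}$, $r_3=(bc)^{-1/3}$, so $EM$ and $RGM$ coincide for $n\leq 3$. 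Since you yourself note that $RGM$ is scale invariant, the $EM$ ranking of a $3\times 3$ matrix can never be reversed by raising the entries to a power $\alpha$; equivalently, the ratio you propose to derive is simply $w_1/w_2=\left(a^2b/c\right)^{1/3}$, and its position relative to $1$ is unchanged when $a,b,c$ are replaced by $a^\alpha,b^\alpha,c^\alpha$. The ``direct versus indirect channel'' mechanism you describe only becomes capable of producing rank reversal for $n\geq 4$, where the principal eigenvector genuinely differs from the geometric-mean vector; so the search you outline would run forever at $n=3$, and the write-up as planned could not be completed.

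The overall strategy --- a single explicit counterexample suffices, because the statement is a pure non-satisfaction claim --- is exactly right and is what the paper does, except that the paper does not construct one: it simply cites a known example (Example~2.1 and Figure~1 of \citet{GenestLapointeDrury1993}, adapted from \citet{Kendall1955}). If you want a self-contained proof, repair your plan by moving to a $4\times 4$ reciprocal matrix (or a parametric family of them), where the closed-form eigenvector is no longer available but a near-tie between two alternatives can still be engineered and the sign flip under $\alpha\mapsto\alpha'$ certified either by citing the literature example or by an exact algebraic/numerical verification with rational entries.
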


\begin{proof}
It is sufficient to provide a counterexample.
See \citet[Example~2.1]{GenestLapointeDrury1993} and \citet[Figure~1]{GenestLapointeDrury1993}, which are based on an example adapted from \citet{Kendall1955}.
\end{proof}

The following result has already been mentioned in \citet[p.~581-582]{GenestLapointeDrury1993}.

\begin{lemma} \label{Lemma32}
The row geometric mean method satisfies scale invariance.
\end{lemma}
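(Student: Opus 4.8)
The plan is to reduce scale invariance of the row geometric mean to an elementary monotonicity statement, using the fact that raising every entry of a pairwise comparison matrix to the power $\alpha$ raises each row geometric mean to the power $\alpha$, while the normalisation in Definition~\ref{Def35} plays no role in the induced ranking. Concretely, for any $\mathbf{A} = \left[ a_{ij} \right] \in \mathcal{A}^{n \times n}$ write $g_i(\mathbf{A}) = \prod_{j=1}^n a_{ij}^{1/n}$ for the (unnormalised) geometric mean of the $i$th row; since $\mathbf{A}$ is positive, $g_i(\mathbf{A}) > 0$ for every $i$. By Definition~\ref{Def35} we have $w_i^{RGM}(\mathbf{A}) = g_i(\mathbf{A}) / \sum_{k=1}^n g_k(\mathbf{A})$, and because the denominator is a positive constant not depending on $i$, it follows that $w_i^{RGM}(\mathbf{A}) \geq w_j^{RGM}(\mathbf{A})$ if and only if $g_i(\mathbf{A}) \geq g_j(\mathbf{A})$.

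Next I would substitute $\mathbf{A}^{(\alpha)}$, whose entries are $a_{ij}^{(\alpha)} = a_{ij}^{\alpha}$, into this expression. A one-line computation gives $g_i(\mathbf{A}^{(\alpha)}) = \prod_{j=1}^n \left( a_{ij}^{\alpha} \right)^{1/n} = \left( \prod_{j=1}^n a_{ij}^{1/n} \right)^{\alpha} = g_i(\mathbf{A})^{\alpha}$, so by the same normalisation argument $w_i^{RGM}(\mathbf{A}^{(\alpha)}) \geq w_j^{RGM}(\mathbf{A}^{(\alpha)})$ if and only if $g_i(\mathbf{A})^{\alpha} \geq g_j(\mathbf{A})^{\alpha}$. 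Finally, since $\alpha > 0$ and $g_i(\mathbf{A}), g_j(\mathbf{A})$ are strictly positive, the map $t \mapsto t^{\alpha}$ is strictly increasing on $(0, \infty)$, hence $g_i(\mathbf{A})^{\alpha} \geq g_j(\mathbf{A})^{\alpha}$ if and only if $g_i(\mathbf{A}) \geq g_j(\mathbf{A})$. Chaining the three equivalences yields exactly the condition in the definition of scale invariance for $f = RGM$.

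I do not expect a genuine obstacle in this argument; the only points that require care are conceptual rather than technical. First, the statement concerns the ranking induced by the weights and not the weights themselves, which do vary with $\alpha$, so the proof must be organised around order-preserving maps, not equalities. Second, one must explicitly invoke the positivity of the entries of a pairwise comparison matrix so that all $g_i(\mathbf{A})$ are positive and $t \mapsto t^{\alpha}$ is legitimately applied. It is worth remarking that the same computation shows the stronger fact that $\mathbf{w}^{RGM}(\mathbf{A}^{(\alpha)})$ is obtained from $\mathbf{w}^{RGM}(\mathbf{A})$ by a strictly monotone transformation of its components followed by renormalisation, which is the ``power invariance'' flavour mentioned before the axiom; scale invariance is then an immediate corollary.
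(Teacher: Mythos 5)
Your argument is correct and is essentially the paper's own proof, which also reduces scale invariance to the equivalence $w_i^{RGM} \geq w_j^{RGM} \iff \prod_{k} a_{ik} \geq \prod_{k} a_{jk} \iff \prod_{k} a_{ik}^{(\alpha)} \geq \prod_{k} a_{jk}^{(\alpha)}$; you merely spell out the normalisation step and the monotonicity of $t \mapsto t^{\alpha}$ that the paper leaves implicit.
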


\begin{proof}
Note that $w_i^{RGM} \geq w_j^{RGM} \iff \prod_{k=1}^n a_{ik} \geq \prod_{k=1}^n a_{jk} \iff \prod_{k=1}^n a_{ik}^{(\alpha)} \geq \prod_{k=1}^n a_{jk}^{(\alpha)}$, which immediately verifies the statement. 
\end{proof}

\subsection{Measuring inequality} \label{Sec34}

A basic indicator of competition among firms is the Herfindahl--Hirschman index \citep{HannahKay1977, Rhoades1993, Laine1995, OwenRyanWeatherston2007}.

\begin{definition} \label{Def36}
\emph{Herfindahl--Hirschman index} ($HHI$):
Let $\mathbf{w} \in \mathcal{R}^n$ be a weight vector. Its \emph{Herfindahl--Hirschman index} is:
\[
HHI(\mathbf{w}) = \sum_{i=1}^n w_i^2.
\]
\end{definition}
The maximum of $HHI$ is one when one constructor receives the whole amount. However, its minimum (reached when all constructors receive the same amount) is influenced by the number of constructors, therefore it is worth considering a normalised version of the $HHI$.

\begin{definition} \label{Def37}
\emph{Normalised Herfindahl--Hirschman index} ($HHI^*$):
Let $\mathbf{w} \in \mathcal{R}^n$ be a weight vector. Its \emph{normalised Herfindahl--Hirschman index} is:
\[
HHI^*(\mathbf{w}) = \frac{HHI(\mathbf{w}) - 1/n}{1 - 1/n} = \frac{\sum_{i=1}^n w_i^2 - 1/n}{1 - 1/n}.
\]
\end{definition}
The value of $HHI^*$ is always between $0$ (equal shares) and $1$ (maximal inequality).

Since the $HHI$ better reflects market concentration, while the normalised Herfindahl--Hirschman index quantifies the equality of distributions, we will use the latter.

\section{Results} \label{Sec4}

\begin{table}[t]
  \centering
  \caption{Goals matrix, 2014}
  \label{Table3}
    \rowcolors{3}{}{gray!20}
    \begin{tabularx}{\textwidth}{l CCCC CCCC CCC} \toprule \hiderowcolors
          &  \rotatebox{90}{Mercedes} &  \rotatebox{90}{Red Bull} &  \rotatebox{90}{Williams} &  \rotatebox{90}{Ferrari} &  \rotatebox{90}{McLaren} &  \rotatebox{90}{Force India} &  \rotatebox{90}{Toro Rosso} &  \rotatebox{90}{Lotus} &  \rotatebox{90}{Marussia} &  \rotatebox{90}{Sauber} &  \rotatebox{90}{Caterham} \\ \bottomrule \showrowcolors
     Mercedes & ---   & 61    & 63    & 62    & 64    & 64    & 65    & 65    & 66    & 66    & 66 \\
     Red Bull & 14    & ---   & 43    & 54    & 56    & 58    & 64    & 64    & 64    & 64    & 64 \\
     Williams & 11    & 31    & ---   & 44    & 47    & 47    & 62    & 66    & 68    & 68    & 68 \\
     Ferrari & 13    & 22    & 31    & ---   & 42    & 48    & 64    & 66    & 69    & 70    & 69 \\
     McLaren & 11    & 20    & 29    & 34    & ---   & 41    & 59    & 65    & 69    & 70    & 71 \\
     Force India & 12    & 16    & 28    & 28    & 35    & ---   & 52    & 61    & 64    & 63    & 64 \\
     Toro Rosso & 10    & 7     & 11    & 12    & 16    & 21    & ---   & 47    & 56    & 54    & 56 \\
     Lotus & 5     & 8     & 6     & 10    & 11    & 11    & 22    & ---   & 49    & 40    & 51 \\
     Marussia & 6     & 8     & 7     & 7     & 7     & 7     & 15    & 16    & ---   & 28    & 34 \\
     Sauber & 7     & 8     & 4     & 5     & 2     & 5     & 13    & 29    & 44    & ---   & 44 \\
     Caterham & 3     & 8     & 5     & 6     & 3     & 4     & 15    & 12    & 22    & 23    & --- \\ \toprule
    \end{tabularx}
\end{table}

To illustrate the proposed allocation scheme, the 2014 Formula One season will be investigated in detail.
Table~\ref{Table3} shows the goals matrix where the teams are ranked according to the official championship result (Sauber and Marussia both obtained zero points this year). For example, a car of Mercedes was better than a car of Red Bull on $61$ occasions, while a car of Red Bull was better than a car of Mercedes on $14$ occasions.

\begin{table}[t]
  \centering
  \caption{Pairwise comparison matrix, $\alpha = 1$, 2014}
  \label{Table4}
    \rowcolors{3}{}{gray!20}
\centerline{
    \begin{tabularx}{1.1\textwidth}{l CCCC CCCC CCC} \toprule \hiderowcolors
          &  \rotatebox{90}{Mercedes} &  \rotatebox{90}{Red Bull} &  \rotatebox{90}{Williams} &  \rotatebox{90}{Ferrari} &  \rotatebox{90}{McLaren} &  \rotatebox{90}{Force India} &  \rotatebox{90}{Toro Rosso} &  \rotatebox{90}{Lotus} &  \rotatebox{90}{Marussia} &  \rotatebox{90}{Sauber} &  \rotatebox{90}{Caterham} \\ \bottomrule \showrowcolors
    Mercedes & 1     & 4.36  & 5.73  & 4.77  & 5.82  & 5.33  & 6.5   & 13    & 11    & 9.43  & 22 \\
    Red Bull & 0.23  & 1     & 1.39  & 2.45  & 2.8   & 3.63  & 9.14  & 8     & 8     & 8     & 8 \\
    Williams & 0.17  & 0.72  & 1     & 1.42  & 1.62  & 1.68  & 5.64  & 11    & 9.71  & 17    & 13.6 \\
    Ferrari & 0.21  & 0.41  & 0.70  & 1     & 1.24  & 1.71  & 5.33  & 6.6   & 9.86  & 14    & 11.5 \\
    McLaren & 0.17  & 0.36  & 0.62  & 0.81  & 1     & 1.17  & 3.69  & 5.91  & 9.86  & 35    & 23.67 \\
    Force India & 0.19  & 0.28  & 0.60  & 0.58  & 0.85  & 1     & 2.48  & 5.55  & 9.14  & 12.6  & 16 \\
    Toro Rosso & 0.15  & 0.11  & 0.18  & 0.19  & 0.27  & 0.40  & 1     & 2.14  & 3.73  & 4.15  & 3.73 \\
    Lotus & 0.08  & 0.13  & 0.09  & 0.15  & 0.17  & 0.18  & 0.47  & 1     & 3.06  & 1.38  & 4.25 \\
    Marussia & 0.09  & 0.13  & 0.10  & 0.10  & 0.10  & 0.11  & 0.27  & 0.33  & 1     & 0.64  & 1.55 \\
    Sauber & 0.11  & 0.13  & 0.06  & 0.07  & 0.03  & 0.08  & 0.24  & 0.73  & 1.57  & 1     & 1.91 \\
    Caterham & 0.05  & 0.13  & 0.07  & 0.09  & 0.04  & 0.06  & 0.27  & 0.24  & 0.65  & 0.52  & 1 \\ \toprule
    \end{tabularx}
}
\end{table}

The corresponding pairwise comparison is presented in Table~\ref{Table4} with the choice $\alpha = 1$. It can be seen that all entries above the diagonal are higher than the corresponding element below the diagonal, thus a team, which has scored more points in the official ranking, is almost always preferred to a team with a lower number of points by pairwise comparisons. The only exception is Marussia vs.\ Sauber, where the latter constructor has a robust advantage.

\input{Figure1_money_allocation2014}

Figure~\ref{Fig1} plots the shares of the $11$ competing teams with our methodology. Due to the same underlying pairwise comparison matrix, results given by the eigenvector and row geometric mean methods do not differ substantially. Nonetheless, Mercedes, McLaren, and Sauber are always better off with the eigenvector method, while Red Bull prefers it if $\alpha$ does not exceed $1.25$.

Mercedes is the dominant team according to both methods, with an ever increasing share as a function of parameter $\alpha$ (see Figure~\ref{Fig1a}).
There are five middle teams (Red Bull, Williams, Ferrari, McLaren, Force India), which receive more money than the equal share in the case of small $\alpha$ (see Figure~\ref{Fig1b}). Their ranking coincides with the official one for the row geometric mean, although Ferrari, which has scored $216$ points in the season, has only a negligible advantage over McLaren, which has scored $181$ points.
On the other hand, the five weakest teams (Toro Rosso, Lotus, Marussia, Sauber, Caterham) receive less if parameter $\alpha$ increases (see Figure~\ref{Fig1c}). The row geometric mean ranking reflects the official ranking again, although the difference between the shares of Marussia and Sauber is barely visible.

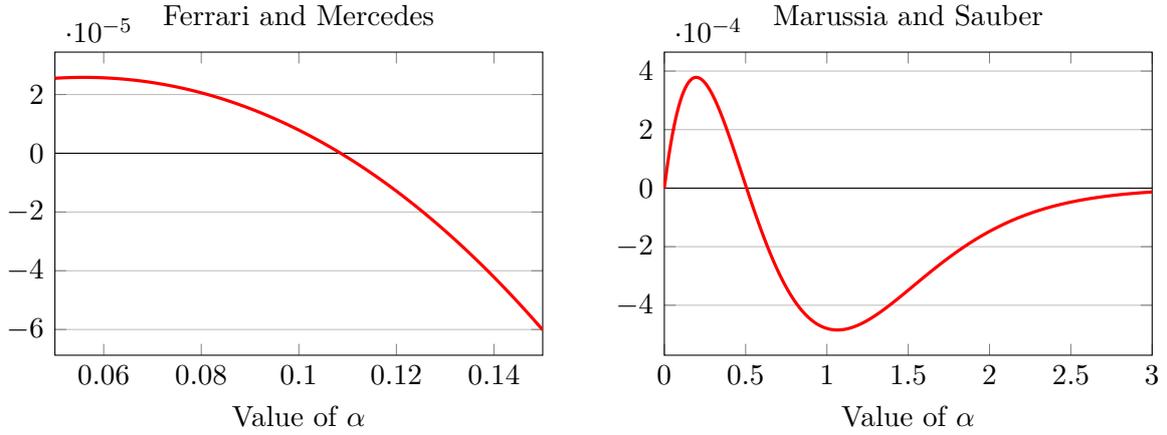
\begin{figure}[ht!]
\centering
\caption{The difference between the revenue shares of some constructors \newline according to the eigenvector method ($EM$), 2014}
\label{Fig2}

\begin{tikzpicture}
\begin{axis}[
name = axis1,
width = 0.5\textwidth, 
height = 0.35\textwidth,
title = {Ferrari and Mercedes},
title style = {align=center, font=\small},
xmin = 0.05,
xmax = 0.15,
ymajorgrids,
xlabel = Value of $\alpha$,
xlabel style = {font=\small},
tick label style = {/pgf/number format/fixed},
]
\draw (axis cs:\pgfkeysvalueof{/pgfplots/xmin},0)  -- (axis cs:\pgfkeysvalueof{/pgfplots/xmax},0);
\addplot[red,smooth,very thick] coordinates {
(0.05,2.55287148678007E-05)
(0.051,2.56233134104111E-05)
(0.052,2.57005843280894E-05)
(0.053,2.57604691454028E-05)
(0.054,2.58029093971046E-05)
(0.055,2.58278466301054E-05)
(0.056,2.58352224031949E-05)
(0.057,2.58249782875003E-05)
(0.058,2.57970558682069E-05)
(0.059,2.57513967427125E-05)
(0.06,2.56879425239998E-05)
(0.061,2.56066348389983E-05)
(0.062,2.55074153297091E-05)
(0.063,2.53902256543981E-05)
(0.064,2.52550074869018E-05)
(0.065,0.000025101702518307)
(0.066,2.49302524567979E-05)
(0.067,2.47405990282945E-05)
(0.068,2.45326839765914E-05)
(0.069,0.000024306449064801)
(0.07,2.40618360752981E-05)
(0.071,2.37987868102058E-05)
(0.072,2.35172430914921E-05)
(0.073,0.000023217146762497)
(0.074,2.28984396878906E-05)
(0.075,2.25610637538953E-05)
(0.076,2.22049608692987E-05)
(0.077,2.18300729656895E-05)
(0.078,2.14363419979019E-05)
(0.079,2.10237099453892E-05)
(0.08,2.05921188106001E-05)
(0.081,2.01415106222957E-05)
(0.082,1.96718274338981E-05)
(0.083,1.91830113250024E-05)
(0.084,0.000018675004401697)
(0.085,1.81477487966042E-05)
(0.086,1.76011866703102E-05)
(0.087,1.70352602111012E-05)
(0.088,1.64499116359074E-05)
(0.089,1.58450831903023E-05)
(0.09,1.52207171497942E-05)
(0.091,1.45767558195065E-05)
(0.092,1.39131415350935E-05)
(0.093,1.32298166636013E-05)
(0.094,1.25267236028986E-05)
(0.095,1.18038047833974E-05)
(0.096,1.10610026679003E-05)
(0.097,1.02982597520035E-05)
(0.098,9.51551856519239E-06)
(0.099,8.71272167068948E-06)
(0.1,7.88981166630065E-06)
(0.101,7.04673118480381E-06)
(0.102,6.18342289490648E-06)
(0.103,5.29982950049634E-06)
(0.104,4.39589374280625E-06)
(0.105,3.47155839959534E-06)
(0.106,2.52676628670345E-06)
(0.107,1.56146025719062E-06)
(0.108,5.75583203099583E-07)
(0.109,-4.30921944197249E-07)
(0.11,-1.45811221449743E-06)
(0.111,-2.50604459729742E-06)
(0.112,-3.57477604130685E-06)
(0.113,-4.66436345460119E-06)
(0.114,-5.77486370369196E-06)
(0.115,-6.90633361290216E-06)
(0.116,-8.05882996431084E-06)
(0.117,-9.2324094971008E-06)
(0.118,-1.04271289061986E-05)
(0.119,-1.16430448439953E-05)
(0.12,-1.28802139170991E-05)
(0.121,-1.41386926876957E-05)
(0.122,-1.54185376725902E-05)
(0.123,-1.67198053426937E-05)
(0.124,-1.80425521224964E-05)
(0.125,-1.93868343889986E-05)
(0.126,-2.07527084727932E-05)
(0.127,-2.21402306562896E-05)
(0.128,-2.35494571728112E-05)
(0.129,-2.49804442077894E-05)
(0.13,-2.64332478966955E-05)
(0.131,-2.79079243257069E-05)
(0.132,-2.94045295301942E-05)
(0.133,-3.09231194950133E-05)
(0.134,-3.24637501536024E-05)
(0.135,-3.40264773879967E-05)
(0.136,-3.56113570268018E-05)
(0.137,-3.72184448499957E-05)
(0.138,-3.88477965770073E-05)
(0.139,-4.04994678809972E-05)
(0.14,-4.21735143759927E-05)
(0.141,-4.38699916239932E-05)
(0.142,-4.55889551310013E-05)
(0.143,-4.73304603480079E-05)
(0.144,-4.90945626689943E-05)
(0.145,-5.08813174329992E-05)
(0.146,-5.26907799200116E-05)
(0.147,-5.45230053550089E-05)
(0.148,-5.63780489010041E-05)
(0.149,-5.82559656679971E-05)
(0.15,-6.01568107019973E-05)
};
\end{axis}

\begin{axis}[
at = {(axis1.south east)},
xshift = 0.1\textwidth,
width = 0.5\textwidth, 
height = 0.35\textwidth,
title = {Marussia and Sauber},
title style = {align=center, font=\small},
xmin = 0,
xmax = 3,
ymajorgrids,
xlabel = Value of $\alpha$,
xlabel style = {font=\small},
tick label style = {/pgf/number format/fixed},
]
\draw (axis cs:\pgfkeysvalueof{/pgfplots/xmin},0)  -- (axis cs:\pgfkeysvalueof{/pgfplots/xmax},0);
\addplot[red,smooth,very thick] coordinates {
(0,0)
(0.02,0.000082380946818103)
(0.04,0.000152878465415895)
(0.06,0.000212268813577998)
(0.08,0.000261312370353095)
(0.1,0.000300751864823895)
(0.12,0.000331310727236009)
(0.14,0.000353691572824999)
(0.16,0.000368574826960399)
(0.18,0.000376617498446294)
(0.2,0.000378452106009411)
(0.22,0.000374685761209498)
(0.24,0.000365899409229299)
(0.26,0.000352647227306102)
(0.28,0.000335456178949804)
(0.3,0.000314825720585901)
(0.32,0.000291227655889702)
(0.34,0.000265106131846896)
(0.36,0.000236877769497802)
(0.38,0.000206931921400803)
(0.4,0.000175631047103698)
(0.42,0.000143311197311598)
(0.44,0.000110282597012901)
(0.46,7.68303175385995E-05)
(0.48,0.000043215027398201)
(0.5,9.67381172409937E-06)
(0.52,-2.35789497169958E-05)
(0.54,-5.63506557716004E-05)
(0.56,-8.84695179041023E-05)
(0.58,-0.000119783564059502)
(0.6,-0.0001501596210156)
(0.62,-0.000179482283334899)
(0.64,-0.0002076528764616)
(0.66,-0.000234588420918599)
(0.68,-0.000260220603949101)
(0.7,-0.000284494764331501)
(0.72,-0.000307368895474603)
(0.74,-0.000328812671303198)
(0.76,-0.000348806498841499)
(0.78,-0.000367340600840697)
(0.8,-0.0003844141312522)
(0.82,-0.000400034325830401)
(0.84,-0.000414215689675302)
(0.86,-0.000426979223071399)
(0.88,-0.000438351686574601)
(0.9,-0.000448364905920401)
(0.92,-0.000457055116988399)
(0.94,-0.0004644623507553)
(0.96,-0.000470629857898299)
(0.98,-0.0004756035724728)
(1,-0.000479431613887899)
(1.02,-0.0004821638262201)
(1.04,-0.0004838513537653)
(1.06,-0.00048454625160043)
(1.08,-0.000484301129831551)
(1.1,-0.00048316883012659)
(1.12,-0.0004812021330729)
(1.14,-0.000478453494860739)
(1.16,-0.00047497481176958)
(1.18,-0.00047081721092358)
(1.2,-0.00046603086578514)
(1.22,-0.000460664834868289)
(1.24,-0.00045476692217537)
(1.26,-0.00044838355789082)
(1.28,-0.00044155969790238)
(1.3,-0.00043433874076191)
(1.32,-0.000426762460744559)
(1.34,-0.00041887095571528)
(1.36,-0.0004107026085641)
(1.38,-0.00040229406102647)
(1.4,-0.0003936801987607)
(1.42,-0.00038489414661168)
(1.44,-0.00037596727304652)
(1.46,-0.0003669292028048)
(1.48,-0.00035780783686232)
(1.5,-0.00034862937886238)
(1.52,-0.00033941836722286)
(1.54,-0.00033019771218019)
(1.56,-0.00032098873708256)
(1.58,-0.00031181122329417)
(1.6,-0.00030268345812023)
(1.62,-0.00029362228520802)
(1.64,-0.00028464315692365)
(1.66,-0.00027576018824557)
(1.68,-0.00026698621175624)
(1.7,-0.00025833283335097)
(1.72,-0.00024981048831895)
(1.74,-0.00024142849748527)
(1.76,-0.00023319512313502)
(1.78,-0.00022511762446973)
(1.8,-0.00021720231237584)
(1.82,-0.00020945460330979)
(1.84,-0.0002018790721301)
(1.86,-0.00019447950372856)
(1.88,-0.000187258943334847)
(1.9,-0.000180219745387725)
(1.92,-0.000173363620884339)
(1.94,-0.000166691683135954)
(1.96,-0.000160204491873332)
(1.98,-0.000153902095659352)
(2,-0.00014778407257868)
(2.02,-0.000141849569186232)
(2.04,-0.000136097337706347)
(2.06,-0.000130525771483976)
(2.08,-0.000125132938697599)
(2.1,-0.000119916614350962)
(2.12,-0.000114874310567337)
(2.14,-0.000110003305215748)
(2.16,-0.000105300668903626)
(2.18,-0.000100763290374645)
(2.2,-0.00009638790035419)
(2.22,-0.000092171093887938)
(2.24,-0.00008810935122157)
(2.26,-0.000084199057271654)
(2.28,-0.000080436519739253)
(2.3,-0.000076817985919014)
(2.32,-0.000073339658257143)
(2.34,-0.000069997708712208)
(2.36,-0.000066788291972685)
(2.38,-0.000063707557585079)
(2.4,-0.000060751661045965)
(2.42,-0.000057916773910698)
(2.44,-0.00005519909297066)
(2.46,-0.000052594848549957)
(2.48,-0.000050100311971301)
(2.5,-0.000047711802239569)
(2.52,-0.000045425691990147)
(2.54,-0.000043238412747717)
(2.56,-0.000041146459539632)
(2.58,-0.000039146394906448)
(2.6,-0.00003723485235054)
(2.62,-3.54085392621431E-05)
(2.64,-3.36642393604687E-05)
(2.66,-3.19988146858568E-05)
(2.68,-3.04092071773513E-05)
(2.7,-2.88924398683413E-05)
(2.72,-2.74456177313698E-05)
(2.74,-2.60659282015316E-05)
(2.76,-2.47506414063784E-05)
(2.78,-2.34971101286589E-05)
(2.8,-2.23027695267511E-05)
(2.82,-2.11651366361749E-05)
(2.84,-2.00818096741656E-05)
(2.86,-1.90504671679181E-05)
(2.88,-1.80688669257983E-05)
(2.9,-1.71348448695482E-05)
(2.92,-1.62463137442943E-05)
(2.94,-1.54012617220037E-05)
(2.96,-1.45977509129162E-05)
(2.98,-1.38339157984167E-05)
(3,-1.31079615978035E-05)
};
\end{axis}
\end{tikzpicture}

\end{figure}

Lemma~\ref{Lemma32} guarantees that the ranking of the constructors according to their share of the revenue remains unchanged as a function of $\alpha$ if the row geometric mean method is applied. On the contrary, the eigenvector method can lead to a rank reversal: Figure~\ref{Fig1b} shows that McLaren will receive a higher share than Williams if $\alpha$ is bigger than $1.5$. There are two other changes in the ranking as uncovered by Figure~\ref{Fig2}: Ferrari receives more money than McLaren, and Marussia receives more money than Sauber for small values of the parameter. Note that row geometric mean always favours the former teams, Ferrari and Marussia.

Thus the violation of scale invariance by the eigenvector method---similarly to the violation of Pareto efficiency \citep{DulebaMoslem2019}---is not only a theoretical curiosity (Lemma~\ref{Lemma31}) as real data, in our case Formula One results from the 2014 season, may result in such an undesired rank reversal.

The theoretical explanation of these results remains challenging. It is probably sufficient to cite \citet[p.~29]{HermanKoczkodaj1996}: ``\emph{It is improbable that an analytical solution can be devised in a situation where the results favour GM over EV for one metric while favouring EV over GM for another metric.}''
At least, according to \citet[Theorem~9]{SaatyVargas1984c}, both weighting methods satisfy \emph{row dominance}, namely, for any pair of alternatives $i,j$, if $a_{ik} \geq a_{jk}$ holds for all $k$, then $w_i \geq w_j$. For example, this means that Force India should receive at least as much money as Toro Rosso (take a look at Table~\ref{Table4}).
Recently, \citet{KulakowskiMazurekStrada2020} have provided lower and upper bounds with the use of the Koczkodaj inconsistency index \citep{Koczkodaj1993, DuszakKoczkodaj1994} for the maximal difference of the weights derived by the eigenvector and the geometric mean methods. In addition, the two priority vectors are proved to be more similar if the pairwise comparison matrix is less inconsistent \citep[Theorem~19]{KulakowskiMazurekStrada2020}, which corresponds to a lower $\alpha$ in our case. A deeper analysis of this issue is the subject of future research.

\begin{table}[t]
\caption{Alternative results of the 2014 Formula One World Championship}
\label{Table5}
\rowcolors{1}{}{gray!20}
\centerline{
    \begin{tabularx}{1.05\textwidth}{l CCCC CC>{\bfseries}C>{\bfseries}C} \toprule \hiderowcolors
    Team  & \multicolumn{8}{c}{Points scoring systems} \\
          & \multicolumn{2}{c}{1961--1990} & \multicolumn{2}{c}{1991--2002} & \multicolumn{2}{c}{2003--2009} & \multicolumn{2}{c}{\textbf{2010--}} \\
          & Points & Rank  & Points & Rank  & Points & Rank  & Points & Rank \\ \bottomrule \showrowcolors
    Mercedes & 233   & 1     & 249   & 1     & 281   & 1     & 676   & 1 \\
    Red Bull & 96    & 2     & 99    & 2     & 154   & 2     & 389   & 2 \\
    Williams & 62    & 3     & 62    & 3     & 113   & 3     & 287   & 3 \\
    Ferrari & 36    & 4     & 36    & 4     & 78    & 4     & 213   & 4 \\
    McLaren & 31    & 5     & 31    & 5     & 62    & 5     & 171   & 5 \\
    Force India & 16    & 6     & 16    & 6     & 46    & 6     & 141   & 6 \\
    Toro Rosso & 1     & 7     & 1     & 7     & 5     & 7     & 30    & 7 \\
    Lotus & 0     & 8     & 0     & 8     & 2     & 8     & 10    & 8 \\
    Marussia & 0     & 9     & 0     & 9     & 0     & 9     & 2     & 9 \\
    Sauber & 0     & 10    & 0     & 10    & 0     & 10    & 0     & 10 \\
    Caterham & 0     & 11    & 0     & 11    & 0     & 11    & 0     & 11 \\ \toprule
    \end{tabularx}
}
\end{table}

Table~\ref{Table5} presents the standing in the 2014 season under the four basic historical points scoring systems that are detailed in Table~\ref{Table2b}. The official 2014 scheme, which awarded double points in the last race, is not included, however, it leads to the same ranking as the other four rules. The ranking is identical to the one determined by the row geometric mean method. This can be another argument against the use of the eigenvector method.

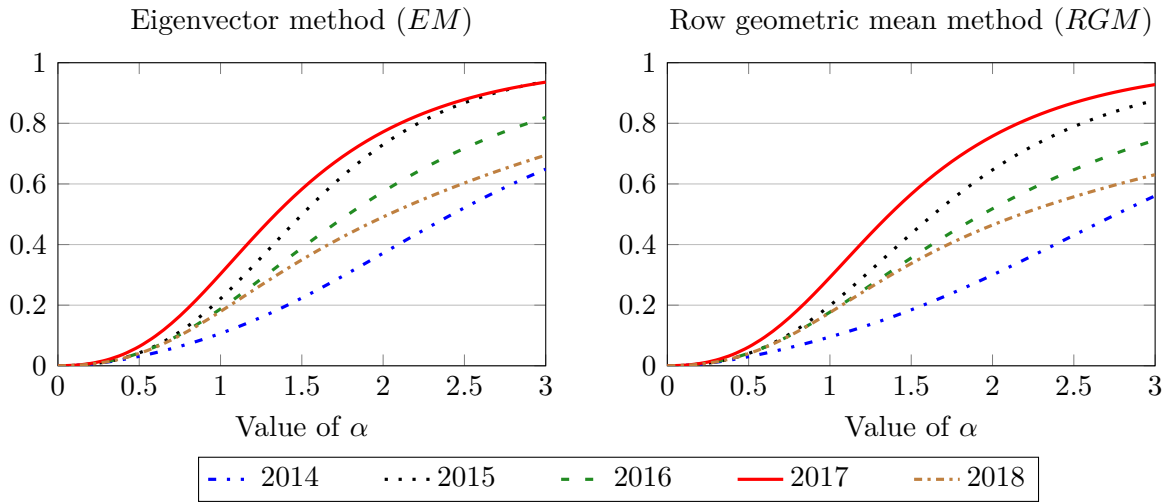
\begin{figure}[ht!]
\centering
\caption{The normalised Herfindahl--Hirschman index ($HHI^*$), 2014--2018}
\label{Fig3}

\begin{tikzpicture}
\begin{axis}[
name = axis1,
width = 0.5\textwidth, 
height = 0.35\textwidth,
title = {Eigenvector method ($EM$)},
title style = {align=center, font=\small},
xmin = 0,
xmax = 3,
ymin = 0,
ymax = 1,
ymajorgrids,
xlabel = Value of $\alpha$,
xlabel style = {font=\small},
tick label style = {/pgf/number format/fixed, font=\small},
]
\addplot[blue,smooth,very thick,loosely dashdotdotted] coordinates {
(0,0)
(0.1,0.00143131514012411)
(0.2,0.00555898613049284)
(0.3,0.012082511924933)
(0.4,0.0206895476324229)
(0.5,0.0311142188128462)
(0.6,0.0431721601384325)
(0.7,0.0567698818306003)
(0.8,0.0718930160245404)
(0.9,0.0885813274297219)
(1,0.106898521512876)
(1.1,0.126903196981335)
(1.2,0.148625035730276)
(1.3,0.172048269285935)
(1.4,0.197102898568633)
(1.5,0.223663087400813)
(1.6,0.251551509835879)
(1.7,0.280548108266634)
(1.8,0.31040163463114)
(1.9,0.340842439532486)
(2,0.371595190531059)
(2.1,0.402390491345433)
(2.2,0.432974692366788)
(2.3,0.463117491430804)
(2.4,0.492617194407116)
(2.5,0.521303721520355)
(2.6,0.54903960132145)
(2.7,0.575719291581727)
(2.8,0.601267211844145)
(2.9,0.625634875447807)
(3,0.64879748014073)
};
\addplot[black,smooth,very thick,loosely dotted] coordinates {
(0,0)
(0.1,0.00125150873618022)
(0.2,0.00540431837207313)
(0.3,0.0131500331886501)
(0.4,0.0252705661872268)
(0.5,0.0425641720093028)
(0.6,0.0657379367474185)
(0.7,0.0952804529777013)
(0.8,0.131340532487417)
(0.9,0.173643629222985)
(1,0.221472237734339)
(1.1,0.273720114411353)
(1.2,0.329009050112382)
(1.3,0.385840182932)
(1.4,0.442745979591959)
(1.5,0.498414631119146)
(1.6,0.551771181446481)
(1.7,0.602013099209039)
(1.8,0.648607748999025)
(1.9,0.69126379153801)
(2,0.729888742120566)
(2.1,0.764542536606253)
(2.2,0.79539369796548)
(2.3,0.822681682262129)
(2.4,0.846686715105777)
(2.5,0.86770698373865)
(2.6,0.886042284532189)
(2.7,0.90198293453676)
(2.8,0.915802750040962)
(2.9,0.92775503391461)
(3,0.938070705518948)
};
\addplot[ForestGreen,smooth,very thick,loosely dashed] coordinates {
(0,0)
(0.1,0.00133569432290308)
(0.2,0.00571374579994997)
(0.3,0.0136782109603689)
(0.4,0.0257053964080553)
(0.5,0.0421368558577609)
(0.6,0.0631213068717554)
(0.7,0.0885796886194158)
(0.8,0.118202487977308)
(0.9,0.151479839491359)
(1,0.187756347584574)
(1.1,0.226297486241676)
(1.2,0.26635416436758)
(1.3,0.30721560572049)
(1.4,0.348245817024596)
(1.5,0.388903506978447)
(1.6,0.428748257883252)
(1.7,0.467436932581382)
(1.8,0.504714198982922)
(1.9,0.540400286175851)
(2,0.574378135301866)
(2.1,0.606581256247363)
(2.2,0.636982958178248)
(2.3,0.66558719407057)
(2.4,0.692421005801914)
(2.5,0.717528425455598)
(2.6,0.74096563518803)
(2.7,0.762797179176438)
(2.8,0.783093035020808)
(2.9,0.801926375448038)
(3,0.819371877065012)
};
\addplot[red,smooth,very thick] coordinates {
(0,0)
(0.1,0.0017212992461858)
(0.2,0.00769660904351459)
(0.3,0.0192274508818564)
(0.4,0.0375799119415641)
(0.5,0.0637439771900766)
(0.6,0.0981783358266153)
(0.7,0.140625836664407)
(0.8,0.190075844553548)
(0.9,0.244897563509419)
(1,0.30310050183272)
(1.1,0.362634929070721)
(1.2,0.421647082098757)
(1.3,0.478639923504891)
(1.4,0.53253326403563)
(1.5,0.582646135087966)
(1.6,0.628634038833085)
(1.7,0.670409674393461)
(1.8,0.708066135561333)
(1.9,0.741812197572544)
(2,0.771922620311733)
(2.1,0.798702604207841)
(2.2,0.822463905429508)
(2.3,0.843509781379041)
(2.4,0.862126231468571)
(2.5,0.878577512037718)
(2.6,0.893104426171188)
(2.7,0.905924331118146)
(2.8,0.917232146307306)
(2.9,0.927201891866351)
(3,0.935988459503662)
};
\addplot[brown,smooth,very thick,dashdotted] coordinates {
(0,0)
(0.1,0.00121227353787118)
(0.2,0.00529500150023815)
(0.3,0.012896304099627)
(0.4,0.0245614482368087)
(0.5,0.0406365448695442)
(0.6,0.061190517912153)
(0.7,0.085978668022128)
(0.8,0.114461075973981)
(0.9,0.14587270020824)
(1,0.179326584016263)
(1.1,0.213923905867572)
(1.2,0.248846969761089)
(1.3,0.283420755605456)
(1.4,0.317139956230737)
(1.5,0.349667047786797)
(1.6,0.380811059589796)
(1.7,0.410496880889599)
(1.8,0.438732810689194)
(1.9,0.465581221734945)
(2,0.491134690208097)
(2.1,0.515498152324512)
(2.2,0.538776600095879)
(2.3,0.561067350952226)
(2.4,0.582455815200415)
(2.5,0.603013772321408)
(2.6,0.622799336935248)
(2.7,0.641857980774758)
(2.8,0.66022414503984)
(2.9,0.677923115669957)
(3,0.69497294110014)
};
\end{axis}

\begin{axis}[
at = {(axis1.south east)},
xshift = 0.1\textwidth,
width = 0.5\textwidth, 
height = 0.35\textwidth,
title = {Row geometric mean method ($RGM$)},
title style = {align=center, font=\small},
xmin = 0,
xmax = 3,
ymin = 0,
ymax = 1,
ymajorgrids,
xlabel = Value of $\alpha$,
xlabel style = {font=\small},
tick label style = {/pgf/number format/fixed},
legend entries={$2014 \qquad$,$2015 \qquad$,$2016 \qquad$,$2017 \qquad$,$2018$},
legend style = {at={(-0.1,-0.3)},anchor=north,legend columns = 5,font=\small}
]
\addplot[blue,smooth,very thick,loosely dashdotdotted] coordinates {
(0,0)
(0.1,0.00142682327088737)
(0.2,0.00551473391664741)
(0.3,0.0119053802006071)
(0.4,0.0202061728456879)
(0.5,0.0300535275043967)
(0.6,0.0411549902371943)
(0.7,0.0533073312631785)
(0.8,0.0663945983648216)
(0.9,0.0803732688737497)
(1,0.0952515744973957)
(1.1,0.111068283303597)
(1.2,0.12787405626682)
(1.3,0.145716720026692)
(1.4,0.164630647715522)
(1.5,0.184629843286704)
(1.6,0.205704103692526)
(1.7,0.227817618663039)
(1.8,0.250909441169215)
(1.9,0.274895355389252)
(2,0.299670752704214)
(2.1,0.325114191491678)
(2.2,0.351091365462699)
(2.3,0.377459243767842)
(2.4,0.40407017969538)
(2.5,0.430775817397055)
(2.6,0.457430659576945)
(2.7,0.483895193631527)
(2.8,0.510038508351138)
(2.9,0.535740366437827)
(3,0.560892728140694)
};
\addplot[black,smooth,very thick,loosely dotted] coordinates {
(0,0)
(0.1,0.00124595334612832)
(0.2,0.00534343273038703)
(0.3,0.0128838084493553)
(0.4,0.0244870856807816)
(0.5,0.0407319820490394)
(0.6,0.0620706832160068)
(0.7,0.0887435584317835)
(0.8,0.120713051772756)
(0.9,0.157633851510134)
(1,0.198868140800899)
(1.1,0.243543138481266)
(1.2,0.290637768268054)
(1.3,0.339079866822128)
(1.4,0.387836003761278)
(1.5,0.435981248015092)
(1.6,0.482743258603031)
(1.7,0.527521334120652)
(1.8,0.569885111330882)
(1.9,0.609559262623741)
(2,0.646400372693323)
(2.1,0.680370982046206)
(2.2,0.711514259819256)
(2.3,0.739931352019111)
(2.4,0.765762347282257)
(2.5,0.789171050109291)
(2.6,0.810333303886917)
(2.7,0.829428384843638)
(2.8,0.846632916712201)
(2.9,0.862116772712025)
(3,0.876040492896224)
};
\addplot[ForestGreen,smooth,very thick,loosely dashed] coordinates {
(0,0)
(0.1,0.00133313639307992)
(0.2,0.00568481460714784)
(0.3,0.0135501150009253)
(0.4,0.0253278739295741)
(0.5,0.0412583486061093)
(0.6,0.061376642510811)
(0.7,0.0854943907966586)
(0.8,0.11321502544186)
(0.9,0.143978625409733)
(1,0.17712487311598)
(1.1,0.211959771901751)
(1.2,0.247813774506878)
(1.3,0.28408394292867)
(1.4,0.320258185529759)
(1.5,0.355923680880113)
(1.6,0.39076366392887)
(1.7,0.424547099366109)
(1.8,0.457115068957311)
(1.9,0.488366597003481)
(2,0.51824556031287)
(2.1,0.546729480528787)
(2.2,0.573820423297052)
(2.3,0.599537893785781)
(2.4,0.623913456758967)
(2.5,0.64698675910742)
(2.6,0.668802644767613)
(2.7,0.689409093270404)
(2.8,0.70885576369017)
(2.9,0.727192974675789)
(3,0.74447099360487)
};
\addplot[red,smooth,very thick] coordinates {
(0,0)
(0.1,0.00171174502436212)
(0.2,0.00760802205324636)
(0.3,0.0188911443887668)
(0.4,0.0367108962778198)
(0.5,0.0619539344480255)
(0.6,0.0950264285163465)
(0.7,0.135699962560839)
(0.8,0.183080647300668)
(0.9,0.235719199931006)
(1,0.291827241348104)
(1.1,0.349531467150953)
(1.2,0.407097315333497)
(1.3,0.463079725302856)
(1.4,0.516391409221878)
(1.5,0.566303102459212)
(1.6,0.612400313866768)
(1.7,0.654520102382929)
(1.8,0.692685015697566)
(1.9,0.727044097218893)
(2,0.757825137229385)
(2.1,0.78529869743752)
(2.2,0.809752558333501)
(2.3,0.831474546838274)
(2.4,0.850741677639695)
(2.5,0.867813823013931)
(2.6,0.882930499257425)
(2.7,0.896309715680148)
(2.8,0.908148131397313)
(2.9,0.918621996977614)
(3,0.927888528847249)
};
\addplot[brown,smooth,very thick,dashdotted] coordinates {
(0,0)
(0.1,0.00121037956672582)
(0.2,0.00527762179183037)
(0.3,0.0128301030638667)
(0.4,0.0243872998296676)
(0.5,0.0402654876972345)
(0.6,0.0605027771643151)
(0.7,0.0848259443424797)
(0.8,0.112671236382934)
(0.9,0.143255291942042)
(1,0.175677607210356)
(1.1,0.209028966495311)
(1.2,0.242482863150384)
(1.3,0.275356227436173)
(1.4,0.307136610802406)
(1.5,0.337481173439737)
(1.6,0.366196732373771)
(1.7,0.393210303177914)
(1.8,0.418537563708508)
(1.9,0.442253985847658)
(2,0.464470981985394)
(2.1,0.48531769324877)
(2.2,0.504928025597993)
(2.3,0.523432063375992)
(2.4,0.540950867204704)
(2.5,0.55759373444589)
(2.6,0.573457155499867)
(2.7,0.588624872465284)
(2.8,0.603168604919125)
(2.9,0.617149137786581)
(3,0.630617566528572)
};
\end{axis}
\end{tikzpicture}
\end{figure}


Figure~\ref{Fig3} depicts the value of the inequality measure $HHI^*$ as a function of parameter $\alpha$ in the five seasons between 2014 and 2018. This can be especially relevant for a decision-maker who should fix the rules before the start of a season with having in mind a maximal level of inequality. For instance, choosing $\alpha = 1$ provides that the normalised Herfindahl--Hirschman index will not exceed $0.31$ if the given season remains more balanced than the 2017 season.

As our intuition suggests, a higher $\alpha$ results in a more unequal distribution. It is also worth noting that $HHI^*$ is consistently smaller in the case of row geometric mean than for the eigenvector method. This can be explained by an important finding of \citet[p.~28]{HermanKoczkodaj1996}: ``\emph{After all, geometric means are `means', and therefore the Euclidean metric (which `equalizes' differences of all elements) generates better results for GM than for EV solutions.}'' In other words, the eigenvector method allows for bigger differences between the ratios of the weights, which is---according to Definition~\ref{Def36}---unfavourable for $HHI$, thus for $HHI^*$, too.
Furthermore, while the order of the seasons by the normalised Herfindahl--Hirschman index for a given $\alpha$ is relatively robust, the shape of the five lines varies.

\begin{table}[ht!]
  \centering
  \caption{Revenue allocation based on performance in the 2018 season}
  \label{Table6}
\rowcolors{3}{gray!20}{}
\begin{threeparttable}
    \begin{tabularx}{0.9\textwidth}{Lccccc} \toprule \hiderowcolors
    Constructor & Position & Column 2 revenue & Share (\%) & \multicolumn{2}{c}{Indifferent $\alpha$} \\ \showrowcolors   
          &       & Million US dollars      &       & $EM$ & $RGM$ \\ \bottomrule
    Mercedes & 1     & 66    & 18.86 & 0.35  & 0.35 \\
    Ferrari & 2     & 56    & 16.00 & 0.35  & 0.35 \\
    Red Bull & 3     & 46    & 13.14 & ---   & --- \\
    Renault & 4     & 38    & 10.86 & ---   & --- \\
    Haas  & 5     & 35    & 10.00 & 0     & 0 \\
    McLaren & 6     & 32    & 9.14  & 0.13  & 0.14 \\
    Racing Point & 7     & 24    & 6.86  & 0.6   & 0.61 \\
    Sauber & 8     & 21    & 6.00  & 0.55  & 0.55 \\
    Toro Rosso & 9     & 17    & 4.86  & 0.59  & 0.59 \\
    Williams & 10    & 15    & 4.29  & 0.6   & 0.6 \\ \bottomrule
    \end{tabularx}
\begin{tablenotes} \footnotesize
\item Column 2 revenue is paid on a sliding scale depending on the teams' finishing position. Source: \url{https://www.racefans.net/2019/03/03/formula-1-teams-prize-money-payments-for-2019-revealed/}
\end{tablenotes}
\end{threeparttable}
\end{table}

Finally, Table~\ref{Table6} summarises the allocation of revenues in 2019, based on the 2018 season.
If the pot would be shared equally, then each team would receive $10$ million US dollars. Hence, the constructor Haas is indifferent between its actual position and our proposal with $\alpha = 0$.
By increasing the parameter, only the three top teams (Mercedes, Ferrari, Red Bull) will receive more money, implying that Renault could not receive its actual share (10.86\%) under any $\alpha$. The same holds for Red Bull.\footnote{~Figure~\ref{Fig1b} illustrates that a team might receive its maximal share at a particular $\alpha > 0$.}
Mercedes and Ferrari would be indifferent if $\alpha = 0.35$ as the last two columns of Table~\ref{Table5} show. McLaren prefers our proposal until $\alpha$ stands at a relatively small level. The remaining four teams support the current allocation only if $\alpha$ exceeds $0.55$, when the normalised Herfindahl--Hirschman index becomes about $0.05$ (see Figure~\ref{Fig3}).
Consequently, the pairwise comparison approach contains some non-linearity: the revenue share of certain teams is a non-monotonic function of the inequality parameter, thus the teams cannot be separated into two sets such that one group prefers a small $\alpha$, and the other benefits from a large $\alpha$.

\section{Conclusions} \label{Sec5}

We have presented a model to share the revenue of an industry through the example of the Formula One World Constructors' Championship. The methodology is based on multiplicative pairwise comparison matrices and makes possible to tune the inequality of the allocation by its single parameter. Since the choice of the weighting method has only a marginal effect in this particular application, we recommend using the row geometric mean, which has favourable theoretical properties.
The proposed technique has an important advantage over the official points scoring system of Formula One as it is independent of the somewhat arbitrary valuation given to the race prizes.

Besides offering a novel way for sharing Formula One prize money among the constructors, our methodology can be applied in any area where resources/revenues should be allocated among groups whose members are ranked several times.
Potential examples include further racing competitions such as \href{https://en.wikipedia.org/wiki/Grand_Prix_motorcycle_racing}{Grand Prix motorcycle racing}, combined events in athletics like \href{https://en.wikipedia.org/wiki/Decathlon}{decathlon} and \href{https://en.wikipedia.org/wiki/Heptathlon}{heptathlon}, the performance of countries in the \href{https://en.wikipedia.org/wiki/Olympic_Games}{Olympic Games}, schools on the basis of student tests in various subject areas, or even workplaces where individual contributions on various projects are ranked. The suggested approach is also able to rank Formula One drivers. While this would probably lead to an \emph{incomplete pairwise comparison matrix} as some pilots cannot be compared, both the eigenvector and the row geometric mean method have been generalised for such matrices \citep{Kwiesielewicz1996, BozokiFulopRonyai2010}.

Our paper can inspire further research in various fields.
The first possible direction is the analysis of other weighting methods concerning scale invariance.
The proposed methodology provides a new ranking of the Formula One constructors, which can be compared to alternative ranking systems.
Finally, representing race results in a pairwise comparison matrix may encourage novel ways to measure competitive balance by inconsistency indices \citep{Brunelli2018}.

\section*{Acknowledgements}
\addcontentsline{toc}{section}{Acknowledgements}
\noindent
We are grateful to \emph{S\'andor Boz\'oki} and \emph{Tam\'as Halm} for useful comments. \\
Four anonymous reviewers provided valuable remarks and suggestions on earlier drafts. \\
The research was supported by the MTA Premium Postdoctoral Research Program grant
PPD2019-9/2019, the NKFIH grant K 128573, and the \'UNKP-19-3-III-BCE-97 New
National Excellence Program of the Ministry for Innovation and Technology.

\bibliographystyle{apalike}
\bibliography{All_references}

\end{document}